%
\documentclass[runningheads]{llncs}
\usepackage{graphicx}
\usepackage{amsmath}
\usepackage{tabularray}
\usepackage{booktabs}
\usepackage{rotating}
\usepackage[misc,geometry]{ifsym}
\usepackage{enumitem}
%
\pagenumbering{arabic}
\begin{document}
\title{Disease Incidence in a Stochastic SVIRS Model with Waning Immunity\thanks{Supported by Ministry of Science and Innovation (Government of Spain), Project PID2021-125871NB-I00. D. Taipe acknowledges the support of Banco Santander and Complutense University of Madrid, Pre-doctoral Researcher Contract CT63/19-CT64/19.}}
\titlerunning{Disease Incidence in a Stochastic SVIRS Model with Loss of Protection}
%
\author{M.J. Lopez-Herrero \Letter\inst{1}\orcidID{0000-0003-2835-2940} \and
D. Taipe\inst{1}\orcidID{0000-0002-2207-4129}  }
\authorrunning{Lopez-Herrero and Taipe}
%
\institute{Complutense University of Madrid, Madrid, Spain \\
\email{\{lherrero \Letter, dtaipe\}@ucm.es}\\
}
\maketitle              
\begin{abstract}
This paper deals with the long-term behaviour and incidence of a vaccine-preventable contact disease, under the assumption that both vaccine protection and immunity after recovery are not lifelong. The mathematical model is developed in a stochastic markovian framework. The evolution of the disease in a finite population is thus represented by a three-dimensional continuous-time Markov chain, which is versatile enough to be able to compensate for the loss of protection by including vaccination before the onset of the outbreak and also during the course of the epidemics.

\keywords{Epidemic model  \and Temporary immunity \and Vaccine failures.}
\end{abstract}
\section{Introduction}

An essential tool to represent the evolution of an infectious process through a population is mathematical modeling. Many epidemic models involve a compartmental division of individuals which take into account their health status with respect to the infection. It is also a common assumption that individuals are randomly in contact with each other and have no preferences for relationship. 

The fundamental compartmental model for studying an infectious disease that confers immunity is the SIR proposed by Kermack \& McKendrick \cite{KMc}, which can be modified to take into account the control of disease spread (e.g., vaccination \cite{Angelovetal,aaa,Zh}) or the effects of loss of immunity \cite{Angelovetal,aaa,bbb}.

In this paper, we consider an infectious disease taking place in a finite and isolated population. We assume that the disease is transmitted by direct contact with an infectious individual. After recovery, individuals show temporary immunity. In addition, we assume that the disease is a vaccine-preventable infection, but vaccinated individuals are not lifelong fully protected, either because of vaccine failures or because of waning immunity. Given the latter fact, re-vaccination of susceptible individuals is planned as the epidemic progresses.

Consequently, the involved compartmental model divides the population in four classes: Susceptible to the infection (S), vaccine protected (V), infectious (I) and recovered-temporary immune individuals (R). Once the temporary immunity is lost, individuals become susceptible to the disease again. Figure \ref{fgtr} shows the movement of individuals between the four compartments involved in the mathematical model.


Our research will focus on the study of the evolution of a communicable disease by incorporating vaccination and loss of immunity into the epidemiological model, considering both waning vaccine effects and temporary immunity after recovery.
In particular, we consider a Markovian model to represent the spread of the pathogen that causes the disease in a finite population of constant size $N$. 

\begin{figure}
    \centering
    \includegraphics[width=0.86\textwidth]{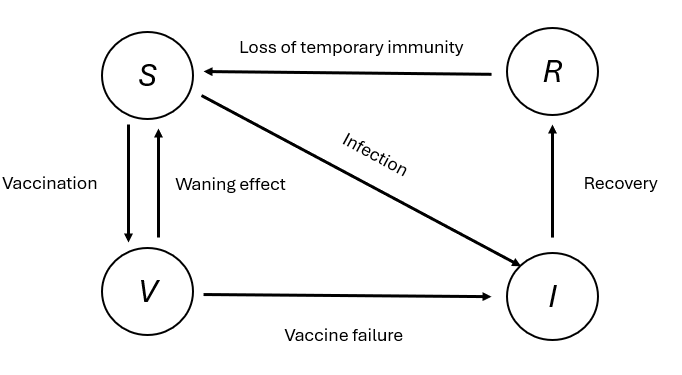}
    \caption{Epidemic compartmental diagram}
    \label{fgtr}
\end{figure}

The aim of this research is, firstly, to observe the long-term behaviour of the Markov chain and thus of the epidemic process itself. Secondly, to study the total number of infections that occur during an outbreak (i.e., the period from the appearance of the first case to the time when there are not infectious individuals in the population).

Theoretical derivations related to the stationary distribution and the probabilistic description of the disease incidence during an outbreak will result from the application of the first-step methodology \cite{ALH14,A-LH}. 

\section{Model description}

Mathematical description involves a continuous time Markov chain (CTMC), which provides the evolution of the disease in terms of the number of individuals present in each compartment at any time $t$. Hence, we record the number of unprotected susceptible, $S(t)$, vaccinated, $V(t)$, infected, $I(t)$, and recovered-temporary immune individuals, $R(t)$. The hypothesis of constant size for the population gives, for any $t \geq 0$, the following relationship among the sizes of the allowed compartments: $N=S(t)+V(t)+I(t)+R(t).$
In consequence, the evolution of the epidemics is represented by a three-dimensional CTMC
\[
\mathcal{X}=\{ (I(t),S(t), V(t)): t\geq 0\}
\]
with state space $\mathcal{S}= \{(i,s,v): 0 \leq i,s,v \leq N, 0\leq i+s+v \leq N\}$, containing a possibly large number of $(N+1)(N+2)(N+3)/6$ states.

We assume that individuals have no relationship preferences, so that any individual can be in contact with any other inhabitant of the population. 

The pathogenic agent is transmitted to susceptible population by direct contact with an infectious individual, at time point of a time-homogeneous Poisson process with the usual bilinear mass action form $\beta(i,s)=\beta i s$, depending on infectious and susceptible individuals, where $\beta$ represents the effective contact rate. Since the vaccine administered is not perfect, we assume that there is a probability, $h$, that it fails in protecting vaccinated individuals. Hence, the pathogen can be transmitted to vaccinated individuals when they are in touch with the infectious ones. The transmission function in the vaccinated class is $\eta(i,v)=\beta h iv$, that depends on the existing infectious and vaccinated individuals, and also on the probability of vaccine failure.

Infectious periods of different infected individuals are represented by independent exponentially distributed random variables, with rate $\gamma$. Each recovered individual, no matter if he was previously vaccinated or not, develops a temporal immunity. The length of the immunity periods of the recovered individuals are independent and also exponentially distributed with rate $\epsilon$. When temporal immunity disappears, recovered individuals become susceptible to the disease. 
To prevent the spread of the disease, susceptible individuals can receive the available vaccine. The vaccination process is scheduled at time points of a time-homogeneous Poisson process with rate $\rho$. Finally, for each vaccinated individual, vaccine protection is assumed to last for an exponentially distributed random time with rate $\theta$.

According the above description, transitions between states in $\mathcal{S}$ are consequence of one of the following events:
\begin{itemize}
    \item[$a_1:$] New infection of a susceptible individual
    \item[$a_2:$] New infection of a vaccinated individual, due to vaccine failure
    \item[$a_3:$] Recovery of one of the existing infectious individuals
     \item[$a_4:$] Vaccination of a susceptible individual
     \item[$a_5:$] A vaccinated person loses vaccine protection
     \item[$a_6:$] A recovered individual loses temporary immunity
\end{itemize}
Given an initial state $(i,s,v) \in \mathcal{S}$, Table \ref{tab1} displays information about possible transitions from the initial state, providing the final states and rates associated with each of the events $a_1 - a_6$.

\begin{table}
\caption{Transitions and rates for each effective event.}\label{tab1}
\centering
\begin{tabular}{|c|c|c|}
\hline
Effective event &  Final state & Rate\\
\hline
$a_1$ &  $(i+1,s-1,v)$ & $\beta i s$\\
$a_2$ &  $(i+1,s,v-1)$ & $h \beta i v$\\
$a_3$ &  $(i-1,s,v)$ & $\gamma i $\\
$a_4$ &  $(i,s-1,v+1)$ & $ \rho s$\\
$a_5$ &  $(i,s+1,v-1)$ & $ \theta v$\\
$a_6$ &  $(i,s+1,v)$ & $ \epsilon (N-i-s-v)$\\
\hline
\end{tabular}
\end{table}

 Sojourn times at each state in $\mathcal{S}$ are independent and exponentially distributed random variables, with rate
\begin{equation}
    q_{(i,v,s)}= \beta i (s+h v)+\gamma i +\rho s+ \theta v+\epsilon (N-i-s-v). \label{sr}
\end{equation}

Since we assume that the population is isolated, part of the states in $\mathcal{S}$ - those corresponding to situations where there are neither infectious nor recovered individuals - are absorbing. Let us denote by $\mathcal{S}_A=\{ (0,s,N-s): 0\leq s\leq N  \}$ the irreducible set of absorbing states. On the other hand, the set of transient states is $\mathcal{S}_T= \mathcal{S}-\mathcal{S}_A$, which is reducible and finite.

\subsection{Long term behavior} \label{s2.1}
As $\mathcal{X}$ is a finite state CTMC, the absorption into $\mathcal{S}_A$ is certain and it occurs in a finite expected time. Moreover, since the absorbing set is a single class of communicating states, the stationary distribution assigns mass to every state in $\mathcal{S}_A$. We notice that our model is a special case of a finite birth-death process (see, for instance \cite{Ross}, Chapter 6), whose stationary distribution is well known. Focusing on the number of susceptible individuals at time t, a birth corresponds to the end of vaccine protection, due to waning immunity; a death corresponds to the vaccination of a susceptible individual. Thus, given that the process $\{S(t):t>0\}$ is in state $s$, the birth and death rates correspond to $\theta (N-s)$ and $\rho s$, respectively.  After some algebra we derive its stationary distribution,  
 $\{p_s: 0\leq s \leq N\}$, which is given by a binomial law
\begin{equation*}
    p_s= \binom{N}{s}\left(\frac{1}{1+\rho / \theta}\right)^{s}   \left(\frac{\rho /\theta}{1+\rho / \theta}\right)^{N-s}, \text{ for }0\leq s \leq N.
\end{equation*}
Hence, we got for any state $(i,s,v)\in \mathcal{S}$ that 
\begin{equation*}
    lim_{t \xrightarrow{} \infty}P\{ I(t)=i, S(t)=s, V(t)=v \}= \delta_{i,0} \delta_{v, N-s}  \binom{N}{s} \left(\frac{\theta}{\theta + \rho}\right)^s \left(\frac{\rho}{\theta + \rho}\right)^v ,
\end{equation*}
where symbol $\delta_{a,b}$ is the Kronecker's delta function, defined as $1 $ when $a=b$ and by $0$ otherwise.

\section{Incidence of the disease during an outbreak}

This section examines the number of infections that occur during an outbreak of the disease. According to the model description, the epidemic stops as soon as there are not infectious cases in the population. We will then consider that outbreaks last while infectious individuals are present in the population and we assume that the outbreaks start from a single infectious individual.

To study the incidence of cases of infection during an outbreak, we will partition the whole state space in levels according to the number of infectious individuals in the population:
\begin{equation*}
    \mathcal{S}= \bigcup_{i=0} ^N \mathcal{S}( i),
\end{equation*}
where, for $0 \leq i \leq N $, each set $\mathcal{S}(i)=\{(i,s,v) \in \mathcal{S}: 0\leq s+v \leq N-i   \}$ contains $c_i= \binom{N-i+2}{2}$ states.

For later use, the above partition of $\mathcal{S}$ by levels, as well as the transition rates among them yield a block-tridiagonal structure for the infinitesimal generator of $\mathcal{X}$ that look as follows:
    \begin{eqnarray}
    {\bf Q} &=& \left(\begin{array}{ccccc}
        {\bf A}_{0,0} & {\bf A}_{0,1} & & &
        \\
        {\bf A}_{1,0} & {\bf A}_{1,1} & {\bf A}_{1,2} & &
        \\
                      & \ddots        & \ddots        & \ddots &
        \\
                      &               & {\bf A}_{N-1,N-2} & {\bf A}_{N-1,N-1} & {\bf A}_{N-1,N}
        \\
                      &               &                   & {\bf A}_{N,N-1}   & {\bf A}_{N,N}
        \end{array}\right)
        \label{eq:genQ}
    \end{eqnarray}
where sub-matrices ${\bf A}_{i,i^*}$ refer to the transition rates from states in the level $S(i)$ to states in the level $S(i^*)$. The diagonal entries correspond to $-q(i, s,v)$, defined in equation (\ref{sr}), and the remaining rates are summarised in Table \ref{tab1}.

In terms of the CTMC describing the evolution of the epidemics, the outbreak starts from a state $(1, s_0,v_0)$ 
and ends when the chain enters into the set $\mathcal{S}(0)$.
\vspace{0.1in}

Let us denote by $L$, the random variable that records the cases of infection observed during the outbreak. We will describe its probabilistic behavior with the help of a set of auxiliary variables, namely $\{L_{(i,s,v)}: (i,s,v) \in \mathcal{S}  \}$. Where each auxiliary variable $L_{(i,s,v)}$ is defined as the number of new infections that occur during the remaining part of the outbreak, given that the current state of the Markov chain is $(i,s,v)$.

To analyze the total incidence during the outbreak we take into account the following relationship:
\begin{equation}
    L=1+L_{(1,s_0,v_0)}.\label{L1}
\end{equation}

Next, we introduce some notation regarding mass and generating functions, and factorial moments of the auxiliary variables, conditioned to a specific state $(i,s,v) \in \mathcal{S}$.
\begin{align*}
     y^j_{(i,s,v)} &= P\{ L_{(i,s,v)}=j  \}, \text{ for } j\geq 0, \\
       \varphi_{(i,s,v)}(z) &= E [z^{L_{(i,s,v)}}]=\sum_{j=0}^{\infty} z^j y_{(i,s,v)}^j 
       , \text{ for } \vert z \vert \leq 1, 
  \end{align*}   
  \[
  m_{(i,s,v)}^k
=       \left\{
\begin{array}{cc}
  \sum_{j=0}^{\infty} y_{(i,s,v)}^j ,  & \text{ for } k=0, \\
    E[\prod_{j=0}^{k-1}(L_{(i,s,v)}-j)],  & \text{ for } k\geq 1.
\end{array}
  \right.     
  \]

We notice that for states $(0,s,v)\in \mathcal{S}(0)$ no additional infections are possible. So, we get
\begin{equation}
    y_{(0,s,v)}^j=\delta_{0,j}, \text{ for }j\geq 0.
\end{equation}
Consequently,
\begin{align}
    \varphi_{(0,s,v)}(z)&=1, \text{ for } \vert z \vert \leq 1, \label{phi0}\\
         m_{(0,s,v)}^k&=\delta_{0,k}, \text{ for } k \geq 0 \label{m1},
\end{align}
where the result in (\ref{m1}) comes from equation (\ref{phi0}) and the relationship $m_{(i,s,v)}^k= \frac{\partial ^k \varphi_{(i,s,v)}(z)}{\partial z^k}\vert_{z=1} $, for $(i,s,v) \in \mathcal{S}$ and $k \geq 0$.

For the remaining group of states $(i,s,v) \in \mathcal{S}-\mathcal{S}(0) $ we will find recursive results to determine probabilities, generating functions and moments.
\vspace{0.15in}

To study the mass distribution of $L$, we assume that the outbreak starts from the state $(1,s_0,v_0)$, where $1+s_0+v_0=N$. 

Given a non-negative integer $k$ and a state $ (i,s,v) \in \mathcal{S}$, we introduce the tail probability $x_{(i,s,v)}^k= P\{ L_{(i,s,v)} \geq k \}$, which represent the probability of having at least $k$ new contagions in the rest of the outbreak, given that the current state of the population is $(i,s,v)$.

Then, we have from equation (\ref{L1}) that
\begin{equation}
    P\{ L\geq k \}=P\{L_{(1, s_0,v_0)}\geq k-1  \}= x_{(1,s_0,v_0)}^{k-1}, \text{ for } k\geq 1.
\end{equation}
As all the epidemic outbreaks start from a single infectious individual, the event $\{ L\geq 1\}$ (or equivalently, $\{ L_{(1,s_0,v_0)} \geq {0} \}$) occurs almost surely. So, the tail probabilities of $0$ additional infections in the outbreak are $x_{(1,s_0,v_0)}^{0}=P\{ L\geq 1 \}=P\{L_{(1, s_0,v_0)}\geq 0  \}= 1$.

We can extend trivially the above result for tail probabilities of $0$ additional infections conditioned to any state $(i,s,v) \in \mathcal{S}$. That is,
\begin{equation}
  x_{(i,s,v)}^0=1.  \label{tp0} 
\end{equation}

On the other hand, as the outbreak ends when the chain $\mathcal{X} $ reaches any state in $\mathcal{S}(0)$, for states in this level we have that
\begin{equation}
    x_{(0,s,v)}^k=0, \text{ for } k \geq 1.\label{tp1}
    \end{equation}
Given $k\geq 0$, let us denote by $\mathbf{x}_i^k$ the $c_i$-dimensional vector whose components are the tail probability of $k$ additional infections related to states in the level $\mathcal{S}(i)$. That is, $\mathbf{x}_i^k=(x^k_{(i,s,v)}: (i,s,v) \in \mathcal{S}(i))^{\prime}$, where the symbol $\prime$ means transpose.

Moreover, in what follows $\mathbf{1}_a$ and $\mathbf{0}_a$ represent the all ones and all zeros vectors of dimension $a$, respectively, and the empty products, appearing in theoretical derivations, are an identity matrix of the appropriate dimension.

The following theorem provides a scheme to determine the tail probabilities recursively.

 \begin{theorem} \label{Teor1}
       For a given integer $ k \geq 1$, the tail probability vectors  $\mathbf{x}_i^k$ are recursively determined from  $\mathbf{x}_i^{k-1}$ through the following equations
\begin{eqnarray} \mathbf{x}_i^0 &=&\mathbf{1}_{c_i} \text{, for }\;0 \leq i \leq N, \label{T1a}\\
\mathbf{x}_0^k &=&\mathbf{0}_{c_0} \text{, for }\; k \geq 1, \label{T1b}\\
\mathbf{x}_i^k &=& - \mathbf{A}_{i,i}^{-1}\sum _{m =1}^{i} \left({\displaystyle\prod _{j=m}^{i -1}}  \mathbf{A}_{j+1,j} (-\mathbf{A}_{j,j}^{-1})    \right )\mathbf{d}^k_m,  \text{ for }\;1 \leq i \leq N, \label{T1c} 
\end{eqnarray}
where $\mathbf{d}^k_m = -(1-\delta_{i,N}) \mathbf{A}_{m,m+1}^{-1} \mathbf{x}_{m+1}^{k-1}$, for $1 \leq m \leq N$.

   \end{theorem}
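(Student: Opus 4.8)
The plan is to condition on the first event that occurs after the chain leaves the current state $(i,s,v)$, exploiting the fact that a new infection raises the level from $\mathcal{S}(i)$ to $\mathcal{S}(i+1)$, a recovery lowers it to $\mathcal{S}(i-1)$, and the vaccination / loss-of-protection / loss-of-immunity events keep the chain inside $\mathcal{S}(i)$. By the strong Markov property, for a transient state $(i,s,v)$ with $i\geq 1$ the tail probability $x_{(i,s,v)}^{k}$ equals the probability-weighted sum of $x_{(i+1,\cdot,\cdot)}^{k-1}$ over the states reachable by an $a_1$ or $a_2$ transition (the $k-1$ appearing because one new infection has been counted), plus the weighted sum of $x_{(i-1,\cdot,\cdot)}^{k}$ over states reachable by an $a_3$ transition, plus the weighted sum of $x_{(i,\cdot,\cdot)}^{k}$ over states reachable by $a_4,a_5,a_6$; the weights are the corresponding jump rates from Table~\ref{tab1} divided by $q_{(i,s,v)}$. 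Writing this simultaneously for all states in $\mathcal{S}(i)$ and using the block notation of \eqref{eq:genQ}, the $q_{(i,s,v)}$ denominators combine with the off-diagonal rates to give, in matrix form,
\begin{equation*}
 \mathbf{A}_{i,i}\,\mathbf{x}_i^{k} + \mathbf{A}_{i,i-1}\,\mathbf{x}_{i-1}^{k} + \mathbf{A}_{i,i+1}\,\mathbf{x}_{i+1}^{k-1} = \mathbf{0}_{c_i},\qquad 1\le i\le N,
\end{equation*}
with the convention that the term $\mathbf{A}_{i,i+1}\,\mathbf{x}_{i+1}^{k-1}$ is absent when $i=N$ (there is no level $N+1$), which is exactly what the factor $(1-\delta_{i,N})$ in the definition of $\mathbf{d}^k_m$ encodes. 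The boundary identities \eqref{T1a} and \eqref{T1b} are just \eqref{tp0} and \eqref{tp1} rewritten in vector form, so they require no further argument.

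Next I would solve this linear recursion in $i$. Since $\mathbf{A}_{i,i}$ is a (strictly diagonally dominant, hence invertible) sub-generator for every transient level, I can rewrite the balance equation as
\begin{equation*}
 \mathbf{x}_i^{k} = -\mathbf{A}_{i,i}^{-1}\mathbf{A}_{i,i-1}\,\mathbf{x}_{i-1}^{k} \;-\;\mathbf{A}_{i,i}^{-1}\mathbf{A}_{i,i+1}\,\mathbf{x}_{i+1}^{k-1}.
\end{equation*}
Setting $\mathbf{d}^k_i = -(1-\delta_{i,N})\,\mathbf{A}_{i,i+1}^{-1}\mathbf{x}_{i+1}^{k-1}$ (well defined because $\mathbf{A}_{i,i+1}$, describing only infection events, is a diagonal matrix of the rates $\beta i(s+hv)$, invertible on states with $i\geq 1$ and $s+v>0$; the degenerate rows are handled by the empty-product convention), the forcing term becomes $-\mathbf{A}_{i,i}^{-1}\mathbf{A}_{i,i+1}\mathbf{d}^k_i$, and the recursion is a first-order inhomogeneous recursion in $i$ with $\mathbf{x}_0^k=\mathbf{0}$. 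Unrolling it from $i=0$ upward gives the telescoping product
\begin{equation*}
 \mathbf{x}_i^{k} = -\mathbf{A}_{i,i}^{-1}\sum_{m=1}^{i}\Bigl(\textstyle\prod_{j=m}^{i-1}\mathbf{A}_{j+1,j}(-\mathbf{A}_{j,j}^{-1})\Bigr)\mathbf{A}_{m,m+1}\mathbf{d}^k_m,
\end{equation*}
and absorbing $\mathbf{A}_{m,m+1}$ into $\mathbf{d}^k_m$ is precisely the definition used in the statement; matching the factor $-(1-\delta_{i,N})$ with the correct index (it should be $\delta_{m,N}$, i.e.\ the last level contributes nothing extra) is a bookkeeping point to check against the paper's own convention. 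Finally I would verify the base case $k=1$ directly (then $\mathbf{x}_{m+1}^{0}=\mathbf{1}$ by \eqref{T1a}) to confirm the induction on $k$ closes.

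The main obstacle I anticipate is the careful justification of the one-step conditioning for \emph{every} state in $\mathcal{S}(i)$ simultaneously, in particular at the boundary of the level set: states with $s+v=N-i$ allow no $a_6$ event, states with $v=0$ allow no $a_2,a_5$, states with $s=0$ allow no $a_1,a_4$, and when $i=1$ a recovery sends the chain into an absorbing state of $\mathcal{S}(0)$ where the tail probability is $0$ by \eqref{tp1}. One must check that in each such case the corresponding entry of the relevant $\mathbf{A}$-matrix is exactly zero so that the uniform matrix identity still holds — this is routine from Table~\ref{tab1} but needs to be stated. The only genuinely delicate point is the invertibility of $\mathbf{A}_{i,i+1}$ used to define $\mathbf{d}^k_m$: on the single state $(i,0,0)$ of each level the infection rate vanishes, so $\mathbf{A}_{i,i+1}$ is not invertible as written; I would note that on that state no infection transition exists, the product $\prod_{j=m}^{i-1}$ degenerates appropriately, and the paper's empty-product/identity-matrix convention together with the fact that $\mathbf{A}_{i,i+1}\mathbf{A}_{i,i+1}^{-1}$ is only ever applied through the combination $\mathbf{A}_{i,i+1}\mathbf{d}^k_m=-\mathbf{x}_{i+1}^{k-1}$ makes the final formula well defined regardless.
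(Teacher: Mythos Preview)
Your approach is essentially identical to the paper's: both proofs derive the matrix recursion $-\mathbf{A}_{i,i-1}\mathbf{x}_{i-1}^k-\mathbf{A}_{i,i}\mathbf{x}_i^k=(1-\delta_{i,N})\mathbf{A}_{i,i+1}\mathbf{x}_{i+1}^{k-1}$ by a first-step conditioning argument on the jump out of $(i,s,v)$, and then solve it by forward substitution starting from $\mathbf{x}_0^k=\mathbf{0}$. Your reservations about the inverse $\mathbf{A}_{m,m+1}^{-1}$ and the index in $\delta_{i,N}$ are well founded --- the paper's own proof in fact identifies $\mathbf{d}^k_i$ directly with the right-hand side $(1-\delta_{i,N})\mathbf{A}_{i,i+1}\mathbf{x}_{i+1}^{k-1}$ (no inverse), so these are imprecisions in the statement rather than gaps in your argument.
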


   \begin{proof}
First, note that the results in equations (\ref{T1a}) and (\ref{T1b}) are the matrix form expressions of results given in equations (\ref{tp0}) and (\ref{tp1}).

Next, for states in $\mathcal{S}(i), 1\leq i \leq N$, and $k\geq 1$, we determine tail probabilities using a first-step argument.  Thus, for a fixed integer $k\geq 1$ and a given state $(i,s,v) \in \mathcal{S}(i)$, with $1\leq i \leq N$, we condition on the first transition out of the initial state and we derive the following relations:
\begin{eqnarray*}
    x^k_{(i,s,v)}&=&\frac{\beta is}{q_{(i,s,v)}}  x^{k-1}_{(i+1,s-1,v)} +\frac{h \beta iv}{q_{(i,s,v)}}  x^{k-1}_{(i+1,s,v-1)} \nonumber\\
    &+&\frac{\gamma i} {q_{(i,s,v)}}  x^k_{(i-1,s,v)}+\frac{\rho s}{q_{(i,s,v)}}  x^k_{(i,s-1,v+1)}\\
     &+&\frac{\theta v}{q_{(i,s,v)}}  x^k_{(i,s+1,v-1)} +\frac{\epsilon (N-i-s-v)}{q_{(i,s,v)}}  x^k_{(i,s+1,v)} , \text{ for } 1\leq i \leq N-1,\nonumber\\
     x^k_{(N,0,0)}&=&\frac{\gamma N}{q_{(N,0,0)}}  x^k_{(N-1,0,0)}.
\end{eqnarray*}
which are equivalent to
\begin{eqnarray}
 -\gamma i x^k_{(i-1,s,v)} - \rho s x^k_{(i,s-1,v+1)} &+& q_{(i,s,v)} x^k_{(i,s,v)}    \label{pri}\\ 
 - \theta v x^k_{(i,s+1,v-1)} & -& \epsilon (N-i-s-v)x^k_{(i,s+1,v)}\nonumber\\
 =\beta is  x^{k-1}_{(i+1,s-1,v)}& +&h \beta iv  x^{k-1}_{(i+1,s,v-1)}
   , \text{ for } 1\leq i \leq N-1,\nonumber \\
    -\gamma N  x^k_{(N-1,0,0)}+q_{(N,0,0)} x^k_{(N,0,0)}&=&0. \label{prN}
\end{eqnarray}
For each level $i$, $1\leq i \leq N$, we can express equations (\ref{pri}) and (\ref{prN}) in matrix form as follows
\begin{eqnarray}
    -\mathbf{A}_{i,i-1} \mathbf{x}_{i-1}^k  -\mathbf{A}_{i,i} \mathbf{x}_{i}^k =(1-\delta_{i,N})  \mathbf{A}_{i,i+1} \mathbf{x}_{i+1}^{k-1} \label{pr3}.
\end{eqnarray}
 Recalling the definition of $d^k_m$ given in the statement of the theorem, equation (\ref{pr3}) is
 \[
  -\mathbf{A}_{i,i-1} \mathbf{x}_{i-1}^k  -\mathbf{A}_{i,i} \mathbf{x}_{i}^k =d^k_i, \text{ for } 1\leq i \leq N \text{ and } k\geq 1.  
 \]
 Finally, a standard forward elimination-backward substitution procedure drives to expression (\ref{T1c}).
   \end{proof}

Since neither the variable number of cases of infection, $L$, nor the auxiliary variables are bounded, we cannot determine their moments directly from their mass distribution functions. Instead, we will derive recursive schemes for computing means, variances, and any other higher-order moment.

First, we introduce some notation. Given $k \geq 0$ we denote by $\mathbf{m}^k_i$ the $c_i-$dimensional vector containing the factorial moments of order $k$ related to states in $\mathcal{S}(i)$. That is, $\mathbf{m}_i^k=(m^k_{(i,s,v)}: (i,s,v) \in \mathcal{S}(i))^{\prime}$.

Following result shows the scheme to compute the factorial moments of auxiliary variables $L_{(i,s,v)}$, for any state $(i,s,v) \in \mathcal{S}$, in a recursive way starting from the explicit values of the order zero moments.

 \begin{theorem} \label{Tmo} 
       Given a non-negative integer $k$, factorial moments of order $k$, $\mathbf{m}_i^k$ are computed as follows
\begin{eqnarray}
\mathbf{m}_i^0 &  = & \mathbf{1}_{c_i}, \text{ for } 0\leq i \leq N  \label{m0}\\
\mathbf{m}_0^k &=& \mathbf{0}_{c_0},\text{ for } k \geq 1 \label{mi0}\\
 \mathbf{m}_N^k&  = & -\mathbf{B}_{N,N}^{-1} \mathbf{C}_N^k, \text{ for } k \geq 1 \label{miN} \\
 \mathbf{m}_i^k&  = &-\mathbf{B}_{i,i}^{-1} (\mathbf{A}_{i,i+1} \mathbf{m}_{i+1}^k + \mathbf{C}_i^k), \text{ for } k \geq 1 \text{ and }  \;1 \leq i \leq N -1, \label{mik} 
 \end{eqnarray}
 where matrices $\mathbf{B}_{i,i}$ and $\mathbf{C}_i^k$ are recursively determined from
 \begin{eqnarray*}
     \mathbf{B}_{i,i}&=&\delta _{1,i} \mathbf{A}_{1,1}+(1-\delta _{1,i})(\mathbf{D }_i^k-\mathbf{A}_{i,i-1}\mathbf{B}_{i-1,i-1}^{-1} \mathbf{A}_{i-1,i}) \label{mB} \\
       \mathbf{C}_{i}^k&=&\delta _{1,i} \mathbf{D}_{1}^k +(1-\delta _{1,i})\left(\mathbf{D }_i^k-\mathbf{A}_{i,i-1}\mathbf{B}_{i-1,i-1}^{-1} \mathbf{C}_{i-1}^k\right) \label{mC}
 \end{eqnarray*}
with $\mathbf{D }_i^k= (1-\delta_{i,N}) k \mathbf{A}_{i,i+1}\mathbf{m}_{i+1}^{k-1}$, for $1 \leq i \leq N.$  
   \end{theorem}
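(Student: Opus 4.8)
The plan is to reproduce, for the factorial moments, the first-step scheme that produced Theorem~\ref{Teor1}, but applied to the probability generating functions $\varphi_{(i,s,v)}(z)$ and then differentiated at $z=1$, using $m^{k}_{(i,s,v)}=\partial^{k}\varphi_{(i,s,v)}(z)/\partial z^{k}|_{z=1}$. The whole argument is an induction on the moment order $k$. For the base case $k=0$, equation (\ref{m0}) merely says that every $L_{(i,s,v)}$ is an honest random variable, $m^{0}_{(i,s,v)}=\sum_{j\ge 0}y^{j}_{(i,s,v)}=1$, which holds because $\mathcal{X}$ is a finite CTMC whose absorption into $\mathcal{S}_{A}\subset\mathcal{S}(0)$ is certain; and (\ref{mi0}) is just the matrix form of (\ref{m1}), already established in the excerpt.

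For the inductive step, fix $k\ge 1$ and assume the vectors $\mathbf{m}^{k-1}_{j}$, $0\le j\le N$, are known. Conditioning on the first transition out of a state $(i,s,v)\in\mathcal{S}(i)$ with $1\le i\le N-1$, and noting that a transition of type $a_{1}$ or $a_{2}$ multiplies the remaining generating function by $z$ because it increases by one the count recorded by $L$, gives
\[
q_{(i,s,v)}\varphi_{(i,s,v)}(z)=\beta i s\,z\,\varphi_{(i+1,s-1,v)}(z)+h\beta i v\,z\,\varphi_{(i+1,s,v-1)}(z)+\gamma i\,\varphi_{(i-1,s,v)}(z)+\rho s\,\varphi_{(i,s-1,v+1)}(z)+\theta v\,\varphi_{(i,s+1,v-1)}(z)+\epsilon(N-i-s-v)\varphi_{(i,s+1,v)}(z),
\]
together with $q_{(N,0,0)}\varphi_{(N,0,0)}(z)=\gamma N\,\varphi_{(N-1,0,0)}(z)$ at the top level. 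Differentiating $k$ times and using $\big(z\,\varphi(z)\big)^{(k)}=z\,\varphi^{(k)}(z)+k\,\varphi^{(k-1)}(z)$, the two infection terms each spawn an extra copy of $k\,\varphi^{(k-1)}$ — a known quantity by the inductive hypothesis — while the remaining terms are differentiated plainly; evaluating at $z=1$ turns these into linear relations among the $m^{k}_{\cdot}$ carrying an $m^{k-1}$-source. Grouping by level and reading off the blocks of $\mathbf{Q}$ exactly as in the passage from (\ref{pri})--(\ref{prN}) to (\ref{pr3}), the relations become the block-tridiagonal system
\[
-\mathbf{A}_{i,i-1}\mathbf{m}^{k}_{i-1}-\mathbf{A}_{i,i}\mathbf{m}^{k}_{i}-(1-\delta_{i,N})\mathbf{A}_{i,i+1}\mathbf{m}^{k}_{i+1}=\mathbf{D}^{k}_{i},\qquad 1\le i\le N,
\]
with $\mathbf{m}^{k}_{0}=\mathbf{0}_{c_{0}}$ from (\ref{mi0}) and right-hand side $\mathbf{D}^{k}_{i}=(1-\delta_{i,N})k\,\mathbf{A}_{i,i+1}\mathbf{m}^{k-1}_{i+1}$ exactly the one quoted in the statement.

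It then remains only to solve this system, and I would apply the standard forward-elimination/back-substitution procedure used for Theorem~\ref{Teor1}. Eliminating $\mathbf{m}^{k}_{0},\mathbf{m}^{k}_{1},\dots$ in turn expresses $\mathbf{m}^{k}_{i}=-\mathbf{B}_{i,i}^{-1}\big(\mathbf{A}_{i,i+1}\mathbf{m}^{k}_{i+1}+\mathbf{C}^{k}_{i}\big)$, where $\mathbf{B}_{i,i}$ is the successive Schur complement $\mathbf{A}_{i,i}-\mathbf{A}_{i,i-1}\mathbf{B}_{i-1,i-1}^{-1}\mathbf{A}_{i-1,i}$ started from $\mathbf{B}_{1,1}=\mathbf{A}_{1,1}$, and $\mathbf{C}^{k}_{i}=\mathbf{D}^{k}_{i}-\mathbf{A}_{i,i-1}\mathbf{B}_{i-1,i-1}^{-1}\mathbf{C}^{k}_{i-1}$ started from $\mathbf{C}^{k}_{1}=\mathbf{D}^{k}_{1}$; these are precisely the $\mathbf{B}_{i,i}$, $\mathbf{C}^{k}_{i}$ recursions in the statement. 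At the top level $\mathbf{D}^{k}_{N}=\mathbf{0}$, so the $i=N$ equation collapses to (\ref{miN}), $\mathbf{m}^{k}_{N}=-\mathbf{B}_{N,N}^{-1}\mathbf{C}^{k}_{N}$, and back-substitution downward from $i=N-1$ to $i=1$ gives (\ref{mik}), closing the induction.

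The one point I expect to need genuine care, as opposed to bookkeeping, is the legitimacy of differentiating the generating-function identities $k$ times and letting $z\to 1$, i.e.\ that every factorial moment $m^{k}_{(i,s,v)}$ is finite. This should be secured beforehand: inside the finite transient set $\mathcal{S}_{T}$ the number of jumps before the chain enters $\mathcal{S}(0)$ is stochastically dominated by a geometric variable (finitely many transient states, each with a uniformly positive chance of reaching $\mathcal{S}_{A}$ within a bounded number of jumps), and $L_{(i,s,v)}$ never exceeds that number of jumps, so $L_{(i,s,v)}$ has moments of every order; then each $\varphi_{(i,s,v)}$ is $k$-times differentiable up to $z=1$ and a standard Abelian/monotone-convergence argument validates the term-by-term manipulations. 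A minor accompanying point is invertibility of the blocks inverted above: since $i\ge 1$ forces a strictly positive recovery rate $\gamma i$ leaking out of level $i$, each $-\mathbf{A}_{i,i}$ is a strictly row-diagonally-dominant, hence nonsingular, M-matrix, and the block elimination preserves this property, so every $\mathbf{B}_{i,i}$ is invertible.
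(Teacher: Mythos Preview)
Your proposal is correct and follows essentially the same route as the paper: a first-step argument on the generating functions $\varphi_{(i,s,v)}(z)$, $k$-fold differentiation at $z=1$ via Leibniz to produce the block-tridiagonal system with right-hand side $\mathbf{D}^{k}_{i}$, and then forward elimination/back substitution yielding the $\mathbf{B}_{i,i}$, $\mathbf{C}^{k}_{i}$ recursions. Your added justifications for the finiteness of all factorial moments and for the invertibility of the eliminated blocks go somewhat beyond the paper, which handles the former implicitly and the latter only via the remark that the generating-function system is strictly diagonally dominant.
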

 
\begin{proof}
For states $(0,s,v) \in \mathcal{S}(0)$, the results shown in equations (\ref{m0}) and (\ref{mi0}) come from equation (\ref{m1}).
 
For remaining states, a first-step argument, conditioning on the first transition out of a fixed state $(i,s,v) \in \bigcup_{i=1}^N \mathcal{S}(i) $, gives that generating functions $\varphi_{(i,s,v)}(z)$ satisfy the following system of linear equations:
\begin{align}
    \varphi_{(i,s,v)}(z)&=\frac{\beta is}{q_{(i,s,v)}}z  \varphi_{(i+1,s-1,v)}(z) +\frac{h \beta iv}{q_{(i,s,v)}}z  \varphi_{(i+1,s,v-1)}(z) \nonumber\\
    &+\frac{\gamma i}{q_{(i,s,v)}}  \varphi_{(i-1,s,v)}(z) +\frac{\rho s}{q_{(i,s,v)}}  \varphi_{(i,s-1,v+1)}(z) \nonumber \\
     &+\frac{\theta v}{q_{(i,s,v)}}  \varphi_{(i,s+1,v-1)}(z) +\frac{\epsilon (N-i-s-v)}{q_{(i,s,v)}}  \varphi_{(i,s+1,v)}(z), \nonumber
\end{align}
which are equivalent to
 \begin{eqnarray}
   -\gamma i \varphi_{(i-1,s,v)}(z) - \rho s \varphi_{(i,s-1,v+1)}(z) 
   + q_{(i,s,v)}\varphi_{(i,s,v)}(z) -\theta  v \varphi_{(i,s+1,v-1)}(z)    \nonumber \\
 - \epsilon (N-i-s-v) \varphi_{(i,s+1,v)}(z)
 =\beta is z  \varphi_{(i+1,s-1,v)}(z) +h \beta iv z  \varphi_{(i+1,s,v-1)}(z) \nonumber, \\
 \label{eqphi}
 \end{eqnarray}

We observe that, when $(i,s,v) \in \bigcup_{i=1}^N \mathcal{S}(i)$,
the system of equations given by (\ref{eqphi}) is strictly diagonally dominant. Therefore, for each $|z| \leq 1$ there is a unique  solution of the system. In particular, for $z=1$ we have that the trivial choice $\varphi_{(i,s,v)}(1)=1$, for all $(i,s,v) \in \bigcup_{i=1}^N \mathcal{S}(i)$, is the solution of the above system of equations. Then,
\[
m^0_{(i,s,v)}=\varphi_{(i,s,v)}(1)=1, \text{ for }(i,s,v) \in \bigcup_{i=1}^N \mathcal{S}(i),
\]
which proves result in equation (\ref{m0}), for $1 \leq i \leq N.$

 Next, by differentiating equation (\ref{eqphi}) regarding $z$ repeatedly $k$ times and evaluating at $z=1$ we get a new system of equations involving factorial moments that look as follows:
 \begin{eqnarray}
   -&\gamma& i m^k_{(i-1,s,v)} - \rho s m^k_{(i,s-1,v+1)} + q_{(i,s,v)} m^k_{(i,s,v)}(z) \nonumber \\ & -&  \theta  v m^k_{(i,s+1,v-1)} 
 - \epsilon (N-i-s-v) m^k_{(i,s+1,v)}(z) \nonumber \\
 &=&\beta is \big ( m^k_{(i+1,s-1,v)}+k m^{k-1}_{(i+1,s-1,v)}\big ) 
 +h \beta iv \big ( m^k_{(i+1,s,v-1)}+k m^{k-1}_{(i+1,s,v-1)}\big) \nonumber,
 \end{eqnarray}
which can be expressed in matrix form with the help of the block description of the infinitesimal generator appearing in the expression (\ref{eq:genQ}).
 Hence,  for $k\geq 1$ and $1 \leq i \leq N$, we have
 \begin{eqnarray}
     - \mathbf{A}_{i,i-1} \mathbf{m}^k_{i-1} - \mathbf{A_{i,i}} \mathbf{m}^k_i -
     (1-\delta_{i,N}) \mathbf{A}_{i,i+1} \mathbf{m}^k_{i+1}&= (1-\delta_{i,N}) k \mathbf{A}_{i,i+1} \mathbf{m}^{k-1}_{i+1}. \nonumber\\
      \label{momat}
 \end{eqnarray}
 Note that the right-hand side term of equation (\ref{momat}) agrees with the definition of ${\bf D}^k_i$ in the statement of Theorem \ref{Tmo}. 

 Finally, a standard forward elimination-backward substitution procedure gives the expressions (\ref{miN}) and (\ref{mik}).
 \end{proof}

\section{Numerical Results}

In this section we present some numerical illustrations of the theoretical results. In all the experiments, we consider an isolated population of $N=100$ individuals and we set the recovery rate $\gamma =1.0$, so that the unit time corresponds to the mean time to recovery from the contact disease.

Let us begin by showing results and applications coming from the stationary distribution.
We recall that as time goes on, the population tends towards a pathogen-free situation, where individuals only fluctuate between susceptible and vaccinated compartments.
As we stated in Section \ref{s2.1}, the stationary distribution describes the chance of finding $s$ susceptible and $N-s$ vaccinated individuals, in terms of a Binomial distribution. 
If we focus on the number of vaccinated individuals in the population in the long run, this random variable behaves like a Binomial distribution of $N$ trials and probability of success given by $\frac{\rho}{\rho + \theta}=\frac{\rho/ \theta}{1+{\rho}/{\theta}}$. Therefore, the probability of success remains constant when the booster rate and the waning rate have a constant ratio.
\begin{figure}
    \centering
    \includegraphics[width=0.70\textwidth]{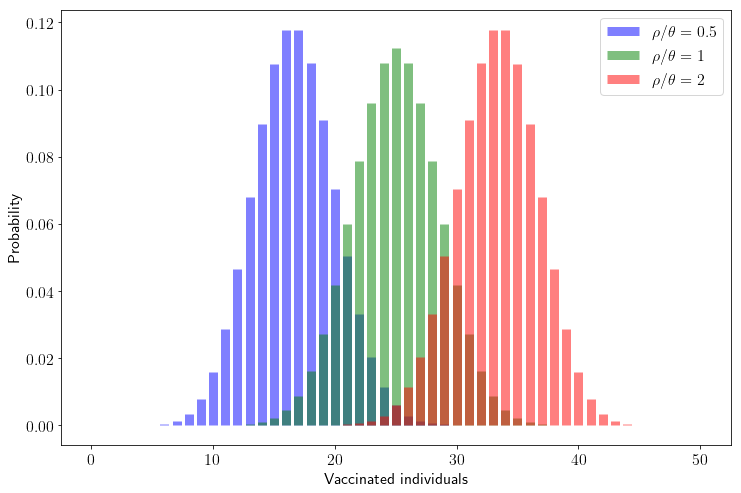}
    \caption{Stationary distribution of $V$, as a function of  $\frac{\rho}{\theta}$}
    \label{fig1nr}
\end{figure}

The Figure \ref{fig1nr} shows the binomial form of the stationary distribution of the number of vaccinated individuals in a population of $N=100$ individuals. We display results for $\frac{\rho}{\theta} \in \{ 0.5, 1.0, 2.0 \}$; that is, when the mean time between booster doses is double, equal to or half the expected duration of vaccine protection. As would be expected, the higher the quotient, the greater the chance that a larger number of vaccinated individuals will be found in the population.

For a general population of $N$ individuals, in the long run, it is expected to find $E[V]=N \frac{\rho / \theta}{1+\rho / \theta}$ vaccinated individuals in average. 
This explicit and simple result may be used to state a criterion for setting an adequate value for the vaccination rate parameter $\rho$, whose inverse value represents the mean time between booster doses of the vaccine. Our criterion will provide protection to susceptible individuals by maintaining a sufficient group of vaccinated individuals, taking into account the characteristics of the disease (basic reproductive number $R_0$) and the characteristics of the vaccine (efficacy and waning effect).
Our idea is to choose the booster rate of the vaccine so that, in the long run, the expected number of vaccinated individuals is greater than the vaccination coverage that provides population herd immunity.

The term herd immunity threshold refers to the critical proportion of immune individuals required to interrupt epidemic transmission in a population.  There is a simple relationship between herd immunity and $R_0$. For a vaccine-preventable contact disease, if a perfect vaccine is available and a fraction $f$ of the population is vaccinated, then the disease will not spread if $(1-f)R_0<1$. In general, the quality of a vaccine is represented by its probability of failure $h \in [0,1]$, where $h=0$ indicates a perfect vaccine and $h=1$ indicates a useless one. 

For imperfect but not useless vaccines, critical coverage is related to the quotient $(1-1/R_0)/(1-h)$ and is the result of a reduction in virus transmission caused by the removal of protected individuals from the susceptible class. 
The critical coverage, $f$, gives the herd-immunity threshold based on the control reproduction number $R_c= R_0(1-(1-h)f)$, which guarantees that $R_c<1$. 
More precisely, starting from the basic reproductive number of the  analogous deterministic model, $R_0=\beta /N \gamma$, the coverage is chosen to satisfy $f> (R_0-1)/R_0(1-h)$).

Using the explicit value of $E[V]$ along with the waning rate of the vaccine, for a given contact disease and an appropriate vaccination coverage $v_c=N f$, our criterion provides the booster rate $\rho$ such that $E[V] \geq v_c$. That is,
\[
\rho \geq \frac{v_c}{N-v_c} \theta.
\]
In the Table \ref{tabcrit}, we show the vaccination coverage and the ratio between the booster rate and the waning rate as we vary the probability of vaccine failure $h \in \{0.05, 0.1, 0.2\}$, for diseases with a basic reproduction ratio $R_0\in \{1.2, 2.5, 4.2\}$. The results in Table 2 for $v_c$ are intuitively correct, showing that we must increase coverage for less effective vaccines (i.e.; as vaccine failure probability increases) and for increasing values of $R_0$. On the other hand, for increasing values of the probability of vaccine failure and also for increasing values of $R_0$, the ratio $\theta / \rho$ decreases. 
\begin{table}
\caption{Vaccination coverage and relation between booster and waning vaccine rates}
       \centering
       \begin{tabular}{|l|l|l|l|}
       \hline
 $v_c$  $\vert \vert$ $\mathbf{\theta / \rho}$ & $h=0.05$& $h=0.1$& $h=0.2$ \\
 \hline
         $R_0=1.2$ & $18$ $\vert \vert$ \textbf{4.55}& $19$ $ \vert \vert$ \textbf{4.26}  & $21$  $\vert \vert$ \textbf{3.76}\\
           $R_0=2.5$ & $64$ $\vert \vert$ \textbf{0.56}  & 67 $\vert \vert$ \textbf{0.49}  & 75 $\vert \vert$ \textbf{0.32}\\
           $R_0=4.2$ & $81$  $\vert \vert$ \textbf{0.23}& $85$ $\vert \vert$ \textbf{0.17}  & $96$  $\vert \vert$ 
 \textbf{0.04}\\ \hline
                \end{tabular}
       \label{tabcrit}
   \end{table}

For example, for the Ebola virus, the transmission $R_0$ is around 1.2. Vaccines have only recently been introduced and there is not yet enough information on the waning effects, but their efficacy varies from 75\% to 100\%. So, in an isolated population of 100 people, we need to keep at least 18 vaccinated if the vaccine is 95\% effective, or at least 21 if the vaccine is 80\% effective. 
Correspondingly, to maintain the herd protection level in the long term, the ratio between booster doses and waning time is 4.55 (i.e.; the mean time between booster doses ($1/\rho$) is at most 4.55 the expected duration of vaccine protection ($1/\theta$)) or 3.76 (if the vaccine is 80\% effective).
For a disease with basic reproduction number $R_0$ close to 4.2, such as diphtheria, if we administer a 95\% effective vaccine, we observe that $v_c=81$ and $\theta / \rho =0.23$.  Therefore, if we fix the number of vaccine protected individuals to be at least 81 in the long run, the mean time between booster doses should be approximately the fourth part of the expected duration of vaccine protection.

\vspace{0.1in}
We will now present results that deal with $L$, the incidence of the infection or the total number of cases of infection that are observed in an outbreak.
We consider a population of 100 individuals, the contact rate $\beta$ and the loss of protection rate $\epsilon$ are fixed at 0.04. Finally, the probability of failure is set at 0.1.

The Table \ref{tab:EVN} shows results for the expected and standard deviation of the incidence $L$. We choose $(\theta ,\rho) \in \{(0.5, 1.0), (1.0,1.0), (1.0, 0.5)   \} $%
, and three different proportions of susceptible and vaccinated individuals as initial situations. More precisely, the initial number of vaccinated individuals is the half, equal to and twice the number of susceptible individuals.

It can be seen that the expected incidence of the infection is reduced when we increase the initial proportion of vaccinated individuals relative to the susceptible ones. For the standard deviation, $SD[L]$, we find that the concentration of the incidence $L$ around its expected value is slightly affected by the initial number of vaccinated individuals.

If we look at the influence of the quotient between the waning rate and the booster rate, ~$\theta / \rho$, numerical results show that the mean and standard deviation of $L$ are not constant when both rates have a constant ratio. For our particular choice of values of the pair $(\theta , \rho)$, we find that an increase in its quotient increases both the expected incidence of cases of infection and the variability around this value.

\begin{table}[htbp]
\centering
\caption{Mean and standard deviation for the total number of cases of infection. }

\begin{tblr} {
  row{2-7} = {c},
  cell{2}{1} = {r=6}{},
  cell{2}{2} = {r=2}{},
  cell{4}{2} = {r=2}{},
  cell{6}{2} = {r=2}{},
  vline{4-7} = {1,3,5,7}{},
  vline{-} = {2-7}{},
  vline{3-7} = {4,6}{},
  hline{1} = {4-6}{},
  hline{2,8} = {-}{},
  hline{4,6} = {2-6}{},
}
                                      &   &   & $(\theta, \rho)=(0.5, 1.0)$  & $(\theta, \rho)=(1.0, 1.0)$  & $(\theta, \rho)=(1.0, 0.5)$  \\
\begin{sideways}$(i_{0}, s_{0}, {v_{0}})$
\end{sideways} & $(1,66,{33})$ & $E[L]$ & 31.7604 & 51.0289 & 62.4891  \\
                                      &   & $SD[L]$ & 33.1116 & 43.7637 & 47.1506 \\
                                      & $(1,49,{50})$ & $E[L]$ & 27.2026 & 46.8978 & 57.9486  \\
                                      &   & $SD[L]$ & 32.0267 & 44.0196 & 48.0731  \\
                                      & $(1,33,{66})$ & $E[L]$ & 23.0980 & 42.8589 & 53.2856 \\
                                      &   & $SD[L]$ & 30.6280 & 43.9308 & 48.5877 
    \end{tblr}

  \label{tab:EVN}%
\end{table}%







%
%
%
\section{Conclusions}

In this paper we consider a stochastic SVIRS model with imperfect vaccine and loss of protection. The mathematical model used to describe how the epidemic evolves is a compartmental stochastic model involving a three-dimensional CTMC.
The finite size and isolation of the population guarantees that the pathogen will disappear as time goes on. And the stationary distribution gives chance to situations where only vaccinated and susceptible individuals exist in the population. The form of this distribution is binomial of N trials and probability of success depending on the quotient between the waning effect rate and the vaccination rate.

Apart from that, our aim in this paper is to analyze the probabilistic nature of $L$, the number of infections observed during an outbreak of the disease.
With the help of some auxiliary variables, we derive stable recursive schemes to compute the mass distribution and moments of the random variable $L$.  

To represent more realistic situations, our model can be extended to non-isolated populations by considering transmission functions $\beta(i,s)$ and $\eta (i,v)$ that are not zero when $i = 0$. In this situation, the disease may disappear for a short time (i.e.; while $I(t)=0$), but through external contacts, the infection is reintroduced later \cite{G-LHAB}. 

The research in this paper can be extended to measure the spread of disease in the population by focusing on the infectiousness of each resident to the whole group  \cite{Art-LHR0,LG17,LH17} or just to the vaccinated pool of individuals \cite{G-LM-LH24}. It will also be interesting to analyse the time taken to reach a threshold in the number of infections. This can be interpreted as a first passage time of the underlying process describing the evolution of epidemics (see e.g. \cite{G-LHMa,GC-LG-LH-T}).

\section{Acknowledgments}
This research was supported by the Ministry of Science and Innovation (Government of Spain) through project PID2021-125871NB-I00. The second author also acknowledges the support of Banco Santander and the Complutense University of Madrid, Pre-doctoral Researcher Contract CT63/19-CT64/19.

\section{Appendix: Supplemental material}

In this Appendix, we characterize the infinitesimal generator ${\bf Q}$ of the three-dimensional CTMC ${\cal X}$ with state space ${\cal S}$ 
\begin{equation*}
    \mathcal{S}=\{(i,s,v): 0\leq i,s,v\leq 0, \quad i+s+v\leq N \}.
\end{equation*}

Note that $\mathcal{S}$ containing $\binom{N+3}{3}$ states, that can be organized into levels according to the number of infected individuals, as follows:
\begin{equation*}
 {\cal S}= \cup_{i=0}^{N} S(i)   
\end{equation*}
where, for $0\leq i\leq N$, each set $S(i)$ is  partitioned in terms of sub-levels as $ S(i)= \cup_{s=0}^{N-i} l(i,s)$ with $ l(i,s)=\{(i,s,v): 0 \leq v \leq N-i-s \}$. In particular, the order of sub-levels inside $S(i)$ is
\begin{equation*}
     l(i,0) \prec l(i,1) \prec \ldots \prec l(i, N-i),
\end{equation*}
while we order the states in $l(i,s)$ according to the number of vaccine protected individuals. That is 
\begin{equation*}
    (i,s,0) \prec (i,s,1) \prec \dots \prec (i,s,N-i-s).
\end{equation*}

\noindent Thus, the number of states in each level $S(i)$ is given by $\# S(i) = \binom{N - i + 2}{2}$ and each sub-level $l(i,s)$ contains $\#l(i,s)=N-i-s+1$, for $0\leq i\leq N$ and $0\leq s\leq N-i$.

The above partition of the state space ${\cal S}$ yields a block-tridiagonal structure for the infinitesimal generator of ${\cal X}$ that looks as follows: 
\begin{eqnarray*}
    {\bf Q} &=& \left(\begin{array}{ccccc}
        {\bf A}_{0,0} & {\bf A}_{0,1} & & &
        \\
        {\bf A}_{1,0} & {\bf A}_{1,1} & {\bf A}_{1,2} & &
        \\
                      & \ddots        & \ddots        & \ddots &
        \\
                      &               & {\bf A}_{N-1,N-2} & {\bf A}_{N-1,N-1} & {\bf A}_{N-1,N}
        \\
                      &               &                   & {\bf A}_{N,N-1}   & {\bf A}_{N,N}
        \end{array}\right),
        \label{eq:genQ1}
\end{eqnarray*}
where the entries of the sub-matrices $A_{i,i*}$ are the transition rates from states in $S(i)$ to states in $S(i*)$. Specifically, the non-null elements are described as follows:

\begin{itemize}[label=\textbullet]
    \item For $1\leq i\leq N$, sub-matrix $\mathbf{A}_{i,i-1}$ is associated with jumps of process ${\cal X}$ from states in level $S(i)$ to states in level $S(i-1)$. In more detail, it has the following form
    \begin{eqnarray*}
    {\bf A}_{i,i-1} &=& \left(\begin{array}{ccccc}
    {\bf B}_{i,i-1}(0,0) & & & & \\
     & {\bf B}_{i,i-1}(1,1) & & &  \\
     & & \ddots & & \\
     & & & {\bf B}_{i,i-1}(N-i,N-i) & 0
     \end{array}\right).
    \end{eqnarray*}
    Matrices $\mathbf{B}_{i,i-1}(s,s)$ record transition rates from sub-level $l(i,s)$ to sub-level $l(i-1,s)$. We express the particular case of $i=N$ as
    \begin{eqnarray*}
    {\bf A}_{N,N-1} &=& \left(\begin{array}{cc}
    {\bf B}_{N,N-1}(0,0), & 0
    \end{array}\right).
    \end{eqnarray*}

    \item For $0\leq i\leq N$, sub-matrix $\mathbf{A}_{i,i}$ is related to transitions from level $S(i)$ to level $S(i)$, and is given by
    \begin{center}
    \resizebox{1\textwidth}{!}{$
    \mathbf{A}_{i,i} = \left(\begin{array}{ccccc}
    {\bf B}_{i,i}(0,0) & {\bf B}_{i,i}(0,1) & & & \\
    {\bf B}_{i,i}(1,0) & {\bf B}_{i,i}(1,1) & {\bf B}_{i,i}(1,2) & & \\
     & \ddots & \ddots & \ddots & \\
     & & {\bf B}_{i,i}(N-i-1,N-i-2) & {\bf B}_{i,i}(N-i-1,N-i-1) & {\bf B}_{i,i}(N-i-1,N-i) \\
     & & & {\bf B}_{i,i}(N-i,N-i-1) & {\bf B}_{i,i}(N-i,N-i)
     \end{array}\right).
    $}
    \end{center}
    Matrices ${\bf B}_{i,i}(s,s')$ record transition rates from sub-level $l(i,s)$ to sub-level $l(i,s')$, for $s-1\leq s'\leq s+1$. It should be pointed out the particular case $i=N$ which leads to 
    \begin{equation*}
    {\bf A}_{N,N}= \left(-q(N,0,0) \right), 
    \end{equation*}
    where $q(N,0,0)=\gamma N$.\\

    \item For $0\leq i \leq N-1$, sub-matrix $\mathbf{A}_{i,i+1}$, corresponds to transitions from states in level $S(i)$ to states in level $S(i+1)$, and is given by
    \begin{center}
    \resizebox{1.1\textwidth}{!}{$
    \mathbf{A}_{i,i+1} = \left(\begin{array}{ccccc}
    {\bf B}_{i,i+1}(0,0) &  & & & \\
    {\bf B}_{i,i+1}(1,0) & {\bf B}_{i,i+1}(1,1) &  & & \\
    & {\bf B}_{i,i+1}(2,1) & {\bf B}_{i,i+1}(2,2)  & & \\
     &  & \ddots & \ddots & \\
     & &  & {\bf B}_{i,i+1}(N-i-1,N-i-2) & {\bf B}_{i,i+1}(N-i-1,N-i-1) \\
     & & &  & {\bf B}_{i,i+1}(N-i,N-i-1)
     \end{array}\right),
    $}
    \end{center}
    where matrices ${\bf B}_{i,i+1}(s,s')$ are related to transition rates from sub-level $l(i,s')$ to sub-level $l(i+1,s')$, for $s'=s-1,s$.
\end{itemize}

\noindent The above matrices ${\bf B}_{i,i'}(s,s')$, for $i-1\leq i'\leq i+1$, and $s-1\leq s'\leq s+1$,
are described below.

\begin{itemize}[label=\textbullet]
    \item For $1\leq i\leq N$ and $0\leq s\leq N-i$, matrix $ \mathbf{B}_{i,i-1}(s,s)$ contains transitions rates from states in level $l(i,s)$ to states in level $l(i-1,s)$, and is given by
    \begin{equation*}
        \mathbf{B}_{i,i-1}(s,s)=\left(\gamma i{\bf I}_{N-i-s+1} \quad {\bf 0}_{N-i-s+1}\right), 
        \end{equation*}
    where ${\bf I}_{a}$ is the identity matrix of order $a$.
    \vspace{0.2cm}
    \item For $0\leq i\leq N$ and $1\leq s\leq N-i$, matrix ${\bf B}_{i,i}(s,s-1)$ contains transitions rates from states in level $l(i,s)$ to states in level $l(i,s-1)$, and is given by
    \begin{equation*}
    \mathbf{B}_{i,i}(s,s-1)=\left( {\bf 0}_{N-i-s+1} \quad \rho s{\bf I}_{N-i-s+1}\right).
    \end{equation*}

    \item For $0\leq i\leq N$ and $0\leq s\leq N-i$, matrix $\mathbf{B}_{i,i}(s,s)$ contains transitions rates from states in level $l(i,s)$ to states in level $l(i,s)$, and is given by
    \begin{equation*}
        \mathbf{B}_{i,i}(s,s)=diag\left(-q_{(i,s,v)}: 0\leq v\leq N-i-s \;\right),
    \end{equation*}
   with $q(i,s,v)= \beta(i,s) + \eta(i,v) + \gamma(i) + \rho s + \theta v + \epsilon(N-i-s-v)$, where $diag(a_{1}, a_{2},..., a_{m})$ is a diagonal matrix with non-null entries $a_{1}, a_{2},..., a_{m}$.
   
\vspace{0.3cm}
    \item For $0\leq i\leq N$ and $0\leq s\leq N-i-1$, matrix $\mathbf{B}_{i,i}(s,s+1)$ contains transitions rates from states in level $l(i,s)$ to states in level $l(i,s+1)$, and is given by
    \begin{center}
    \resizebox{1.1\textwidth}{!}{$
    \mathbf{B}_{i,i}(s,s+1) = \left(
    \begin{array}{cccccc}
    \epsilon(N-i-s) &  &  &  &  &  \\
    \theta & \epsilon(N-i-s-1) &  &  &  &  \\
    & 2\theta & \epsilon(N-i-s-2) &  &  &  \\
     &  & \ddots & \ddots &  &  \\
    &  &  & (N-i-s-2)\theta & 2\epsilon &  \\
    &  &  &  & (N-i-s-1)\theta & \epsilon \\
    &  &  &  &  & (N-i-s)\theta \\
    \end{array}
    \right).
    $}  
    \end{center}
\vspace{0.2cm}
    \item For $0\leq i \leq N-1$ and $1\leq s \leq N-i-1$, matrix $\mathbf{B}_{i,i+1}(s,s-1)$ contains transitions rates from states in level $l(i,s)$ to states in level $l(i+1,s-1)$, and is given by
    \begin{equation*}
        \mathbf{B}_{i,i+1}(s,s-1)=\left( \beta is \; {\bf I}_{N-i-s+1}  \right).
    \end{equation*}    
\vspace{0.2cm}
    \item For $0\leq i \leq N-1$ and $0\leq s \leq N-i-1$, matrix $\mathbf{B}_{i,i+1}(s,s)$ contains transitions rates from states in level $l(i,s)$ to states in level $l(i+1,s)$, and is given by
    \begin{eqnarray*}
    \mathbf{B}_{i,i+1}(s,s) &=& \left(\begin{array}{c}
    {\bf 0}'_{N-i-s}  \\
     diag\left(\beta hiv: 1\leq v\leq N-i-s\;\right)
     \end{array}\right).
     \end{eqnarray*}
    
    \end{itemize}

\end{document}